







\documentclass[twocolumn]{autart}    

\usepackage{amsthm}
\usepackage{graphicx}          
\usepackage{amsmath}
\usepackage{amssymb}
\usepackage{bbm}
\usepackage{mathrsfs}

\newcommand{\R}{\mathbb{R}} 
 
\newcommand{\dd}   {{\rm d}\hbox{\hskip 0.5pt}}

\newcommand{\sbm}[1]{\left[\begin{smallmatrix} #1
   \end{smallmatrix}\right]}

\newcommand{\rfb}[1]{\mbox{\rm
   (\ref{#1})}\ifx\undefined\stillediting\else:\fbox{$#1$}\fi}

\newtheorem{theorem}           {Theorem}[section] 
\newtheorem{definition}         [theorem]{Definition}

\newtheorem{proposition}		[theorem]{Proposition} 
\newtheorem{corollary}		[theorem]{Corollary}
 
\newtheorem{example}		[theorem]{Example}
\newtheorem{remark}	      [theorem]{Remark}

\begin{document}

\begin{frontmatter}

\title{Output contraction analysis of nonlinear systems} 


\author[Paestum]{Hao Yin}\ead{yhyinhao02122575@163.com},    
\author[Paestum]{Bayu Jayawardhana}\ead{b.jayawardhana@rug.nl},               
\author[Baiae]{Stephan Trenn}\ead{s.trenn@rug.nl}  

\address[Paestum]{Discrete Technology $\&$ Production Automation, University of Groningen}  
\address[Baiae]{Systems, Control and Optimization; Bernoulli Institute, University of Groningen}        
\begin{keyword}                           
Contraction analysis;
Output stability;
Lyapunov theorem;
Time-varying systems.               
\end{keyword}                             

\begin{abstract}                          
This paper introduce the notion of output contraction that expands the contraction notion to the time-varying nonlinear systems with output. It pertains to the systems' property that any pair of outputs from the system converge to each other exponentially. {This concept exhibits a more expansive nature when contrasted with another generalized contraction framework known as partial contraction.} The first result establishes a connection between the output contraction of a time-varying system and the output exponential stability of its variational system. Subsequently, we derive a sufficient condition for achieving output contraction in time-varying systems by applying the output contraction Lyapunov criterion. Finally, we apply the results to analyze the output exponential stability of nonlinear time-invariant systems.
\end{abstract}

\end{frontmatter}

\section{Introduction}
The notion of contraction, also referred to as exponential incremental stability, 
provides an analytical tool to study
the asymptotic behavior of time-varying nonlinear systems. The contraction theory characterizes particular systems' property of nonlinear time-varying systems where 
all trajectories from different initial conditions converge exponentially to each other 
\cite{lohmiller1998contraction} and it has been studied extensively in literature for over two decades. 
Forni and Sepulchre \cite{forni2013differential} 
present a Lyapunov-like characterization of contraction property by lifting the Lyapunov function to the tangent bundle. A closely related notion, the so-called incremental stability, is studied in 
\cite{angeli2002lyapunov} via incremental Lyapunov functions. 
As an alternative to the Lyapunov approach, Sontag et al. \cite{sontag2014three} employ matrix measure, also known as logarithmic norm, to directly characterize the contraction properties. In \cite{davydov2022non}, the authors present weak pairings and explore the one-sided Lipschitz condition for the vector field to investigate contractivity with respect to arbitrary norms. The authors in \cite{andrieu2016transverse} study transverse exponential stability, which can be considered as a generalized concept of contraction by using nonlinear Rieammanian metrics.

In general, the contraction notion 
refers to the contraction property of the entire state variables 
\cite{lohmiller1998contraction,forni2013differential,angeli2002lyapunov,sontag2014three,davydov2022non,andrieu2016transverse}. However, 
establishing the contraction property of only a portion of the state variables 
is not trivial. 
Wang et al. in \cite{wang2005partial} introduces the concept of partial contraction, 
which studies the contractivity of subsets of the state space. In \cite{forni2013differential}, the concept of horizontal contraction is presented to study contraction property along particular directions. The notion of $k$-contraction, as defined in \cite{wu2022k}, establishes a contraction property applicable to a $k$-parallelotope. A recent study in \cite{wu2023partial} shows that under certain mild assumptions, partial contraction can lead to horizontal contraction, and horizontal contraction subsequently implies $k$-contraction. In this paper, we explore a general class of systems that admit non-contractive systems as commonly found in practice but can exhibit contraction property for particular output mappings. 

In control theory, the role of systems' output is important as 
it represents the observable or measurable state of the system, with which we determine the control law to steer the input variable. 
Accordingly, many control problems are defined based on the use of systems' output, 
such as output regulation problems \cite{isidori1990output} and output tracking control problems \cite{gao2008network}. In this paper, we 
introduce the notion of output contraction to nonlinear time-varying systems with output, 
which posits that any pair of systems' output converge to each other exponentially. 
This notion generalizes the classical notion of contraction to the class of systems that may not exhibit contraction properties; we will illustrate this later in Example \ref{ex1}. {Additionally, we use a simple contraction case to illustrate that output contraction showcases a broader class of systems compared to the notion of partial contraction introduced in \cite{wang2005partial}.} In our first main result, we establish that the property of 
{Output Exponential Stability (OES)} (which will be defined later) of the variational dynamics is a necessary and sufficient condition for the output contraction of the original systems. If the output map corresponds to an identity map, this condition can be simplified to match the results presented in \cite[Prop.~1]{barabanov2019contraction}, reducing output contraction to standard contraction. In our second result, we present sufficient conditions for output contraction of nonlinear time-varying systems by utilizing an output contraction Lyapunov function to analyze the OES property of the variational systems. In our third contribution, 
we employ the output exponential stability Lyapunov condition to assess the OES of time-invariant systems. This is achieved by ensuring that the system exhibits output contraction, with a constant output 
residing within its output set.

The paper is organized as follows. In Section 2, we present
preliminaries and the main problem formulation. Our main results are presented 
in Section 3, where we present necessary and sufficient conditions
for the output contraction of nonlinear time-varying systems, and the output contraction Lyapunov criterion. The numerical simulations and applications are provided in Section 4 and the conclusions are given in Section 5.

\section{Preliminaries and problem formulation}
Throughout this paper, we consider the following nonlinear time-varying systems 
\begin{equation}\label{odet}
 \left\{\begin{matrix}
\dot{x}=f(x,t),\\ 
y=h(x,t),
\end{matrix}\right.   
\end{equation}
where $x(t)\in \mathbb{R}^{n}$ is the state vector, $f:\mathbb{R}^n\times \mathbb{R}_+ \rightarrow \mathbb{R}^n$ is the vector field, and $h:\mathbb{R}^m\times \mathbb{R}_+ \rightarrow \mathbb{R}^m$ describes the output map. We assume that $f$ and $h$ are continuously differentiable. 
\begin{definition}\label{d1}
A time-varying system \eqref{odet} is {called {\em output contractive}} with respect to the state if there exists positive numbers $c$ and $\alpha$ such that for any pair of the outputs $y(t),y'(t)\in \mathbb{R}^m$ of \eqref{odet} with $i=1,2$, we have
\begin{equation} \label{eq:d1}
\begin{aligned}
\|
y(t)-y'(t)
\|\leq ce^{-\alpha (t-t_0)}\|
x(t_0)-x'(t_0)
\|, \quad \forall t\geq t_0.
\end{aligned}
\end{equation}
\end{definition}
Note that, if the output $h(x,t)$ is an identity map, i.e. $y=x$, Definition \ref{d1} reduces to the standard contraction notion as in \cite[Def.~1]{barabanov2019contraction}. 
\begin{remark}\label{remark1}
A concept closely related to the {\em output contraction} in Definition \ref{d1}, is the {\em partial contraction}, as introduced in \cite{wang2005partial,wu2023partial}. In these works,  
equation \eqref{eq:d1} is replaced by \cite[Def.~4]{wu2023partial}
\begin{equation} \label{eq:d10}
\|y(t)-y'(t)
\|\leq ce^{-\alpha (t-t_0)}\|
y'(t_{0})-y'(t_{0})\|,
\end{equation} 
where exponential decay is only scaled by the initial output difference and not by the entire initial state difference. The following example shows that a system can be an {\em output contraction} while it is not a {\em partial contraction}. 
Consider the following stable LTI system 
\begin{equation} \label{ex00}
\begin{aligned}
\left\{\begin{matrix}
\dot{x}_{1}=-2x_1+x_2,\\ 
\dot{x}_{2}=x_1-2x_2.
\end{matrix}\right.
\end{aligned}
\end{equation}
The trajectories of the system are given by $x_1(t)=\frac{1}{2}(x_{10}-x_{20})e^{-3t}+\frac{1}{2}(x_{10}+x_{20})e^{-t}$ and $x_2(t)=\frac{1}{2}(x_{10}-x_{20})e^{-3t}x_{20}+\frac{1}{2}(x_{10}+x_{20})e^{-t}$. If we take $y(t) = x_1(t)$ then $\|y(t)-y'(t)\|=\|-\frac{1}{2}(y(0)-y'(0)-x_{20}+x'_{20})e^{-3t}+\frac{1}{2}(y(0)-y'(0)+x_{20}-x'_{20})e^{-t}\|$. Consequently, we can easily find $c>0$ and $\alpha>0$ such that \eqref{eq:d1} holds, but because the output difference $y-y'$ depends on the difference of the initial condition of $x_2$ and $x'_2$ it is not possible to satisfy \eqref{eq:d10}. 
\end{remark}
Observe that output contraction  depends heavily on the time-varying nature of the output mapping $h(x, t)$. In fact, it is neither necessary nor sufficient for output contraction that the systems' state is contractive. For instance, consider again the aforementioned example provided in \eqref{ex00}. If we take the output map as $y = e^{2t}x_1$, it is clear that the system no longer possesses the property of output contraction. For a converse example (i.e., output contraction but not state contraction), we refer to the forthcoming Example~\ref{ex1}.

Similar to the traditional analysis of contraction, we investigate the output contraction of \eqref{odet} by analyzing the output stability of the associated variational system with outputs. This variational system, connected to the systems in \eqref{odet}, is given as follows
\begin{equation}\label{vari}
 \left\{\begin{matrix}
\dot{\xi }=\frac{\partial f}{\partial x}(x(t),t)\cdot \xi,\\ 
\nu=\frac{\partial h}{\partial x}(x(t),t)\cdot \xi,
\end{matrix}\right.   
\end{equation}
where $x(t)$ is any solutions of \eqref{odet}. We omit the explicit dependence on $(x(t),t)$ whenever it is clear from the context. Note that \eqref{vari} actually denotes a whole family of time-varying linear systems which is parameterized by the initial value $x(t_0)$ of \eqref{odet}.  

\begin{definition}\label{d2}
The variational system \eqref{vari} is called {\em 
{Output Exponentially Stable (OES)} with respect to the state}, if there exist positive numbers $c$, $\alpha$ {(independent of $x$, $t_0$ and $\xi(t_0)$)} such that for every output $\nu(t)\in\mathbb{R}^m$ of \eqref{vari} the inequality
\begin{equation} \label{eq:d3}
\begin{aligned}
\|\nu(t)\| \leq ce^{-\alpha (t-t_0)}\|\xi(t_0)\|, 
\end{aligned}
\end{equation}
holds for all $t\geq t_0$. 
\end{definition}

\section{Main result}\label{sec:main_result}
In this section, we firstly establish an equivalent relationship between the output contraction of a time-varying system \eqref{odet} and the OES of its variational system \eqref{vari}. Secondly, a sufficient condition is presented that guarantees the OES of the variational system.
\begin{proposition}\label{proposition1}
The nonlinear time-varying system \eqref{odet} {is {\em output contractive}} with respect to the state if and only if the corresponding variational system \eqref{vari} is {\em OES} with respect to the state.
\end{proposition}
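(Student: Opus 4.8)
The plan is to prove both implications from the first-order relationship between finite differences of solutions of \eqref{odet} and solutions of the variational system \eqref{vari} --- the same device underlying \cite[Prop.~1]{barabanov2019contraction}, but now propagated through the output maps.

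For the implication ``OES of \eqref{vari} $\Rightarrow$ output contraction of \eqref{odet}'', I would fix two solutions $x(\cdot),x'(\cdot)$ of \eqref{odet} on a common interval containing $[t_0,\infty)$ and join their initial conditions by the segment $x(t_0)+s\big(x'(t_0)-x(t_0)\big)$, $s\in[0,1]$. Let $x_s(\cdot)$ be the solution of \eqref{odet} starting from this point; since $f$ is $C^1$, the map $s\mapsto x_s(t)$ is $C^1$ and $\xi_s(t):=\partial x_s(t)/\partial s$ solves the variational equation along $x_s(\cdot)$ with the $s$-independent initial value $\xi_s(t_0)=x'(t_0)-x(t_0)$. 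Differentiating $h(x_s(t),t)$ in $s$ and integrating,
\[
y'(t)-y(t)=\int_0^1 \frac{\partial h}{\partial x}(x_s(t),t)\,\xi_s(t)\,\dd s=\int_0^1 \nu_s(t)\,\dd s,
\]
where $\nu_s$ is the output of \eqref{vari} along $x_s$. Applying \eqref{eq:d3} to each $\nu_s$ and using $\|\xi_s(t_0)\|=\|x'(t_0)-x(t_0)\|$ gives
\[
\|y'(t)-y(t)\|\leq \int_0^1 c\,e^{-\alpha(t-t_0)}\|x'(t_0)-x(t_0)\|\,\dd s = c\,e^{-\alpha(t-t_0)}\|x'(t_0)-x(t_0)\|,
\]
which is \eqref{eq:d1} with the same constants.

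For the converse ``output contraction $\Rightarrow$ OES'', I would fix a solution $x(\cdot)$ of \eqref{odet}, a solution $\xi(\cdot)$ of the variational system along it, and $t\geq t_0$. For small $\varepsilon>0$ let $x^{\varepsilon}(\cdot)$ be the solution of \eqref{odet} with $x^{\varepsilon}(t_0)=x(t_0)+\varepsilon\,\xi(t_0)$; by continuous dependence it exists on $[t_0,t]$ for $\varepsilon$ small, and by $C^1$-dependence of the flow on initial data one has $x^{\varepsilon}(t)=x(t)+\varepsilon\,\xi(t)+o(\varepsilon)$, hence $h(x^{\varepsilon}(t),t)=h(x(t),t)+\varepsilon\,\nu(t)+o(\varepsilon)$ as $\varepsilon\to0^{+}$. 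Output contraction \eqref{eq:d1} applied to the pair $x(\cdot),x^{\varepsilon}(\cdot)$ yields $\|h(x^{\varepsilon}(t),t)-h(x(t),t)\|\leq c\,e^{-\alpha(t-t_0)}\varepsilon\,\|\xi(t_0)\|$; dividing by $\varepsilon$ and letting $\varepsilon\to0^{+}$ produces $\|\nu(t)\|\leq c\,e^{-\alpha(t-t_0)}\|\xi(t_0)\|$, i.e. \eqref{eq:d3}, again with the same constants.

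The two displayed estimates are routine; the delicate points I would isolate as standing regularity assumptions are (i) forward completeness along the homotopy, i.e. that every $x_s(\cdot)$ (resp.\ $x^{\varepsilon}(\cdot)$) is defined on the whole interval of interest --- not automatic for nonlinear $f$, and the main obstacle to a fully rigorous argument --- and (ii) the $C^1$-dependence of solutions of \eqref{odet} on their initial conditions, which justifies identifying $\partial x_s/\partial s$ and the $o(\varepsilon)$-expansion with trajectories of \eqref{vari}; (ii) is standard given the assumed $C^1$ regularity of $f$. Because the output-contraction bound is exactly linear in the initial-state displacement, no constants are lost in either limiting step, so $c$ and $\alpha$ transfer unchanged in both directions, which also matches the uniformity requirements in Definitions \ref{d1} and \ref{d2}.
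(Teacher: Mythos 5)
Your proposal is correct and takes essentially the same approach as the paper: your sufficiency direction is the same segment/line-integral argument (the paper writes it as $\int_{x_0}^{\hat{x}_0}\frac{\dd h(\varphi(\zeta,t))}{\dd \zeta}\dd \zeta$, relying on the same uniformity of the OES constants along the segment of initial conditions), and your necessity direction rests on the same difference-quotient identification of $\nu$, phrased as a direct limit where the paper phrases it as a contradiction with slack constants $\tfrac{3}{2}c$ and $\tfrac{2}{3}$. The forward-completeness caveat you isolate is implicitly assumed in the paper as well.
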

\begin{proof}
Let us first establish a relationship between the solutions of \eqref{odet} and those of \eqref{vari}. Let $\sbm{x(t) \\ y(t)}=\sbm{\varphi(x_0,t) \\ h(\varphi(x_0,t),t)}$ and $\sbm{\hat{x}(t)\\ \hat{y}(t)}=\sbm{\varphi(x_0+\delta \xi _0,t) \\ h(\varphi(x_0+\delta \xi _0,t),t)}$ be two trajectories and outputs of \eqref{odet} with initial conditions $x_0$ and $\hat{x}_0:=x_0+\delta \xi_0$, respectively, where $\delta$ is a sufficiently small positive constant and $\xi_0\neq 0$ ($\hat{x}_0$ and $x_0$ are two different initial conditions) will be related later to the initial condition of \eqref{odet}. In the following, we will show that
\begin{equation}\label{varitaj}
\left\{\begin{matrix}
\xi(t):=\lim\limits_{\delta \rightarrow 0} \frac{\varphi(x_0+\delta \xi _0,t)-\varphi(x_0,t)}{\delta },\\ 
\nu (t):=\lim\limits_{\delta \rightarrow 0} \frac{h(\varphi(x_0+\delta \xi _0,t),t)-h(\varphi(x_0,t),t)}{\delta},
\end{matrix}\right.
\end{equation}
are a solution of \eqref{vari} w.r.t.\ $x(t)$ and with initial value 
\[
\begin{aligned}
\xi(t_0) &= 
\xi_0%
,\\ 
\nu (t_0) &= 
\nu_0=\nu_0(\xi_0) :=
\lim\limits_{\delta \rightarrow 0} \tfrac{h(x_0+\delta \xi_0,t_0)-h(x_0,t_0)}{\delta}.
\end{aligned}
\]
By denoting the partial Jacobian matrices of $\varphi$ at $(x_0,t)$ by $\Phi_{x_0}(t)$, we can then rewrite \eqref{varitaj} as 
\begin{equation}\label{varitaj0}
\left\{\begin{matrix}
\xi(t)=\Phi_{x_0}(t)\cdot \xi_0,\\ 
\nu (t)=\frac{\partial h}{\partial x}\cdot\Phi_{x_0}(t)\cdot\xi_0.
\end{matrix}\right.
\end{equation}
From \eqref{varitaj0}, we know that \eqref{varitaj} satisfies $\nu=\tfrac{\partial h}{\partial x}\xi $ as desired. The flow $\varphi (x_0,t)$ of \eqref{odet} satisfies
\begin{equation}\label{fl1}
\begin{aligned}
&\varphi (x_0,t)=x_0+\int_{t_0}^{t}f\big(\varphi (x_0,\tau), \tau\big)\dd \tau,
\end{aligned}
\end{equation}
and similarly, the flow $\varphi(x_0+\delta\xi_0,t)$ satisfies
\begin{equation}\label{fl2}
\begin{aligned}
&\varphi(x_0+\delta\xi_0,t)=x_0+\delta\xi_0+\int_{t_0}^{t}f\big(\varphi (x_0+\delta\xi_0,\tau),\tau\big)\dd \tau.
\end{aligned}
\end{equation}
Hence,
\begin{equation}\label{varitaj1}
\begin{aligned}
&\xi (t)=\xi_0+\int_{t_0}^{t}\lim_{\delta \rightarrow 0}\frac{1}{\delta }\Big(f\big(\varphi (x_0+\delta\xi_0,\tau),\tau\big)\big)\\& \qquad -f\big(\varphi (x_0,\tau), \tau\big)\Big)\dd \tau
\end{aligned}
\end{equation}
Clearly,
\begin{equation}\label{varitaj2}
\begin{aligned}
&\lim_{\delta \rightarrow 0}\frac{1}{\delta }\Big(f\big(\varphi (x_0+\delta\xi_0,\tau),\tau\big)\big)-f\big(\varphi (x_0,\tau), \tau\big)\Big)
\\&=\tfrac{\partial f}{\partial x}(x(\tau),\tau)\cdot \Phi_{x_0}(\tau)\cdot \xi_0
\\&\overset{\eqref{varitaj0}}{=} \tfrac{\partial f}{\partial x}(x(\tau),\tau)\cdot \xi(\tau).
\end{aligned}
\end{equation}
Substituting this back to \eqref{varitaj1} and differentiating with respect to time gives us
\begin{equation}\label{varitaj8}
\begin{aligned}
\dot{\xi }(t)=\frac{\partial f}{\partial x}\xi(t).
\end{aligned}
\end{equation}
Altogether this shows that indeed $\xi(t)$, $\nu(t)$ given by \eqref{varitaj} is a solution of \eqref{vari}. We can now show the sufficiency result.

\emph{Output Contraction $\Rightarrow$ OES.} Let $c$ and $\alpha$ be the constants corresponding to the output contractivity condition. Seeking a contradiction, assume the variational system \eqref{vari} is not OES. Then there exists a solution $x(\cdot)$ of \eqref{odet} and an initial value $\xi_0$ such that for the corresponding output $\nu(\cdot)$ of \eqref{vari} we have that for $c':=\frac{3}{2}c$ and $\alpha':=\alpha$, there exists $T>0$ such that 
\begin{equation} \label{pf90}
\left\|\nu(T)\right\| > c'e^{-\alpha' (T-t_0)} \left\|\xi_0\right\| = \frac{3}{2}ce^{-\alpha (T-t_0)} \left\|\xi_0 \right\|.
\end{equation} 
Let $\hat{y}(\cdot)$, $y(\cdot)$ be outputs of \eqref{odet} with initial values $\hat{x}(t_0):=x(t_0)+ \delta \xi(t_0)$ and $x(t_0)$, respectively. By definition, $\nu (t):=\lim\limits_{\delta \rightarrow 0} \frac{\hat{y}(t)-y(t)}{\delta}$.
Hence, for a sufficiently small $\delta>0$, we have that at time $T$, 
\begin{equation} \label{pf72}
\begin{aligned}
 \frac{\left\|
\hat{y}(T)-y(T)
\right\|}{\delta }>\frac{2}{3}\left\|\nu(T)\right\|,
\end{aligned}
\end{equation}
where the lower-bound constant $\frac{2}{3} < 1$ is chosen arbitrarily for the following computation of bounds. Combining \eqref{pf90}, \eqref{pf72}, we obtain
\begin{align} 
\left\|
\hat{y}(T)-y(T)
\right\| &\overset{\eqref{pf72}}{>}\frac{2}{3}\delta\left\|\nu(T)\right\|\overset{\eqref{pf90}}{>}c e^{-\alpha (T-t_0)}\left\| 
\delta \xi_0\right\|\nonumber\\&=ce^{-\alpha (T-t_0)}\left\|
\hat{x}(t_0)-x(t_0) 
\right\|
\nonumber,
\end{align}
for all $\xi_0\neq 0$. This is in contradiction to the output contractivity of \eqref{odet} and concludes the proof of the sufficiency part.

\emph{OES $\Rightarrow$ Output Contraction.} Let us consider two outputs $y(\cdot)= h\big(\varphi(x_0,\cdot)\big)$ and $\hat{y}(\cdot)= h\big(\varphi(\hat{x}_0,\cdot)\big)$ of \eqref{odet}. Consequently, we can utilize the fundamental theorem of calculus for line integrals to obtain
\begin{equation}\label{pf13}
\hat{y}(t) - y(t)=\int_{x_0}^{\hat{x}_0} \frac{\dd h\big(\varphi(\zeta,t)\big)}{\dd \zeta}\dd \zeta 
\end{equation}
According to \eqref{varitaj}, one has
\begin{equation}\label{pf131}
\nu(t)=\frac{\dd h\big(\varphi(x_0,t)\big)}{\dd x_0}\xi_0.
\end{equation}
From the OES property of \eqref{vari} and \eqref{pf131}, one has
\begin{equation} \label{pf16}
\begin{aligned}
\left\|\nu(t)\right\|=\left\|\frac{\dd h\big(\varphi(x_0,t)\big)}{\dd x_0}\xi_0\right\|\leq ce^{-\alpha (t-t_0)}\|\xi_0\|.
\end{aligned}
\end{equation}
Since $\xi_0\neq 0$, we have
\begin{equation} \label{pf17}
\begin{aligned}
\frac{\left\|\frac{\dd h\big(\varphi(x_0,t)\big)}{\dd x_0}\xi_0\right\|}{\|\xi_0\|}\leq ce^{-\alpha (t-t_0)}.
\end{aligned}
\end{equation}
Given that $\xi_0\neq 0$ is chosen arbitrarily, it follows from \eqref{pf17} that
\begin{equation} \label{pf18}
\begin{aligned}
\left\|\frac{\dd h\big(\varphi(x_0,t)\big)}{\dd x_0}\right\|=\underset{\|\xi_0\|\neq 0}{\sup}\frac{\left\|\frac{\dd h\big(\varphi(x_0,t)\big)}{\dd x_0}\xi_0\right\|}{\|\xi_0\|}\leq ce^{-\alpha (t-t_0)},
\end{aligned}
\end{equation}
Using \eqref{pf18} to get the upper bound of \eqref{pf13}, we have
\begin{equation} \label{pf19}
\begin{aligned}
\left\|\hat{y}(t) - y(t)\right\|&=\left\|\int_{x_0}^{\hat{x}_0} \frac{\dd h\big(\varphi(\zeta,t)\big)}{\dd \zeta}\dd \zeta\right\|\\&\overset{\eqref{pf18}}{\leq}c e^{-\alpha (t-t_0)}\left\|
\hat{x}(t_0)-x(t_0) 
\right\|,
\end{aligned}
\end{equation}
This shows that \eqref{odet} is output contracting and the proof is complete.
\end{proof}
We present now the following theorem on the output contraction property of system \eqref{odet}.
\begin{theorem}[Output Contraction Lyapunov Condition]\label{theorem1} Consider the time-varying system \eqref{odet} with its corresponding variational system \eqref{vari}. Suppose that there exist positive constants $\alpha_1, \alpha_2, \alpha_3, \alpha_4\in\mathbb{R}_{\geq 0}$, $\alpha_3< \alpha_4$, $p\in\mathbb{R}_{\geq 1}$ and a continuous function $V:\mathbb{R}^{n}\times \mathbb{R}^{n}\times \mathbb{R}_{\geq 0}\rightarrow \mathbb{R}_{\geq 0}$ such that, for all $(x,\xi,t)\in\R^n\times\R^n\times\R_{\geq 0}$,
\begin{equation} \label{t1}
\begin{aligned}
\alpha_1\|\nu\|^p\leq V(x,\xi,t)\leq \alpha_2\|\xi\|^pe^{\alpha_3 (t-t_0)}
\end{aligned}
\end{equation}
is satisfied, {where $\nu(x,\xi,t)=\frac{\partial h}{\partial x}(x,t)\xi$}, 
and such that
\begin{equation} \label{t2}
\begin{aligned}
\dot{V}:=\frac{\partial V}{\partial t}+\frac{\partial V}{\partial x}f+\frac{\partial V}{\partial \xi}\frac{\partial f}{\partial x}\xi\leq -\alpha_4 V
\end{aligned}
\end{equation}
holds. Then the system \eqref{odet} is \emph{output contraction} according to Definition \ref{d1}.
\end{theorem}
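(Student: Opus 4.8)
The plan is to reduce everything to Proposition~\ref{proposition1}: once the variational system \eqref{vari} is shown to be OES, output contraction of \eqref{odet} follows at once. So I would fix an arbitrary solution $x(\cdot)$ of \eqref{odet}, an initial time $t_0\ge 0$, and an arbitrary initial value $\xi_0=\xi(t_0)$, and let $\xi(\cdot)$ and $\nu(\cdot)=\tfrac{\partial h}{\partial x}(x(\cdot),\cdot)\,\xi(\cdot)$ be the corresponding solution and output of \eqref{vari}; note that the $\nu$ appearing in \eqref{t1}, evaluated along this trajectory, is exactly this variational output.

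The heart of the argument is a standard Lyapunov/comparison step. I would set $W(t):=V(x(t),\xi(t),t)$ and differentiate along the coupled flow of \eqref{odet}--\eqref{vari}: using $\dot x=f(x,t)$ and $\dot\xi=\tfrac{\partial f}{\partial x}(x,t)\xi$, the chain rule shows that $\dot W(t)$ is precisely the quantity $\dot V$ bounded in \eqref{t2}, so $\dot W(t)\le-\alpha_4 W(t)$. The comparison lemma for scalar differential inequalities then yields $W(t)\le W(t_0)\,e^{-\alpha_4(t-t_0)}$ for all $t\ge t_0$. To turn this into the OES estimate, I would evaluate the upper bound of \eqref{t1} at $t=t_0$, where the exponential factor equals $1$, to get $W(t_0)\le\alpha_2\|\xi_0\|^p$, and the lower bound of \eqref{t1} at the running time $t$, to get $\alpha_1\|\nu(t)\|^p\le W(t)$. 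Chaining these,
\[
\alpha_1\|\nu(t)\|^p\le W(t)\le W(t_0)e^{-\alpha_4(t-t_0)}\le\alpha_2\|\xi_0\|^p e^{-\alpha_4(t-t_0)},
\]
and since $p\ge1$ this gives $\|\nu(t)\|\le(\alpha_2/\alpha_1)^{1/p}\|\xi_0\|\,e^{-(\alpha_4/p)(t-t_0)}$ for all $t\ge t_0$. The constants $(\alpha_2/\alpha_1)^{1/p}$ and $\alpha_4/p>0$ (note $\alpha_4>0$ since $0\le\alpha_3<\alpha_4$) depend only on the data of the theorem, so this is exactly \eqref{eq:d3}; Proposition~\ref{proposition1} then delivers output contraction, and the slack exponent $\alpha_3<\alpha_4$ merely relaxes the admissible time-growth of $V$ without entering the final rate.

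I expect two points to require care. First, regularity: $V$ is assumed only continuous, so the identity for $\dot W$ and the comparison lemma should be read via upper Dini derivatives (or one simply strengthens the hypothesis to $V\in C^1$), after checking that $t\mapsto W(t)$ is absolutely continuous along solutions. Second---and this is the conceptual crux rather than a technical one---it is essential that the upper bound in \eqref{t1} is in terms of $\|\xi\|^p$, the full variational state, and not $\|\nu\|^p$: this is exactly what makes the estimate scale with $\|\xi_0\|$, equivalently with $\|x(t_0)-x'(t_0)\|$, and hence yields output contraction in the sense of Definition~\ref{d1} rather than the strictly weaker partial-contraction estimate \eqref{eq:d10} discussed in Remark~\ref{remark1}.
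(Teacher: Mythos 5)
Your proposal is correct and follows essentially the same route as the paper: differentiate $V$ along the coupled flow of \eqref{odet}--\eqref{vari}, apply the comparison argument to get exponential decay of $V$, sandwich with \eqref{t1} to conclude OES of the variational system, and invoke Proposition~\ref{proposition1}. Your version is in fact marginally sharper (you evaluate the upper bound of \eqref{t1} at $t_0$ and obtain the rate $\alpha_4/p$, whereas the paper settles for $(\alpha_4-\alpha_3)/p$), and your remark about reading $\dot V$ via Dini derivatives when $V$ is only continuous is a legitimate regularity caveat the paper glosses over.
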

\begin{proof}
Note that for every solution $x$ of \eqref{odet} and corresponding solution $\xi$ of \eqref{vari} we have that $\tfrac{\mathrm{d}}{\mathrm{d} t} V(x(t),\xi(t),t)$ equals the left-hand side of \eqref{t2}. Consequently, we have 
\begin{equation} \label{t3}
\begin{aligned}
V(x(t),\xi(t),t)\overset{\eqref{t2}}{\leq} c'e^{-\alpha_4 (t-t_0)}V(x_0,\xi_0,t_0),
\end{aligned}
\end{equation}
with $c'>0$. Then, it follows from \eqref{t1} that
\begin{equation} \label{t4}
\begin{aligned}
\|\nu\|&\overset{\eqref{t1}}{\leq}(\tfrac{1}{\alpha_1})^\frac{1}{p}V(x,\xi,t)^\frac{1}{p}\\
&\overset{\eqref{t3}}{\leq} (\tfrac{c'}{\alpha_1})^\frac{1}{p}e^{-\frac{\alpha_4}{p}(t-t_0)}V(x_0,\xi_0,t_0)^\frac{1}{p}\\
&\overset{\eqref{t1}}{\leq} (\tfrac{c'\alpha_2}{\alpha_1})^\frac{1}{p}e^{-\frac{\alpha_4-\alpha_3}{p}(t-t_0)}\|\xi_0\|.
\end{aligned}
\end{equation}
The variational system \eqref{vari} is OES with $c=c'\frac{\alpha_1}{\alpha_2}$ and $\alpha=\frac{\alpha_4-\alpha_3}{p}$. By Proposition \ref{proposition1}, we can conclude that the system \eqref{odet} {is output contractive.} 
\end{proof}
In Theorem \ref{theorem1}, our analysis includes the utilization of a time-varying Lyapunov function, which distinguishes \eqref{t1} from the established bounds of Lyapunov function. If we select $\alpha_3=0$, equation \eqref{t1} becomes the standard Lyapunov function bounds. For the LTI system \eqref{ex00} in the preceding example, we can choose $V(x,\xi,t)={\xi_1^2+\xi_2^2}$, with $\alpha_1=\alpha_2=1$, $\alpha_3 = 0$, and $p=\alpha_4=2$. In this case, Theorem \ref{theorem1} can be applied to show that the system \eqref{ex00} {is output contractive}. Observe that the choice of $V(x, \xi, t)$ guarantees the contraction of system \eqref{ex00}. {This example shows that $V(x,\xi,t)$ does not need to depend explicitly on $x$ and in view of the $x$-independent upper bound \eqref{t1}, the possible dependence on $x$ is in fact rather limited in general. However, even if $V$ does not explicitly depends on $x$, $\dot{V}$ will still depend on $x$ via $\tfrac{\partial f}{\partial x}$.}

\section{Simulation setup and applications}
This section presents two numerical examples to numerically validate the proposed analysis tools and demonstrate their application. The first example explores the output contraction property of a non-contracting system using Theorem \ref{theorem1}. The second example showcases the practical utilization of Proposition \ref{proposition1} for assessing the output exponential stability of a time-invariant system, as illustrated by Corollary \ref{corollary1}. 
\subsection{Output contraction of nonlinear time-varying systems}
\begin{example}\label{ex1}
Consider a time-varying system whose dynamics take the form
\begin{equation} \label{ex10}
\left\{
\begin{array}{rl}
\dot{x}_{1}&=-0.1x_1^3-(4+\sin t+0.3x_1^2)x_{2}\\
& \qquad +\sin(x_1+x_2)+\cos t,\\ 
\dot{x}_{2} & =-0.1x_2^3-(4+\sin t++0.3x_2^2)x_{1}\\
& \qquad +\cos(x_1+x_2)+\sin t,\\
y& =x_1+x_2,
\end{array}
\right.
\end{equation}
where $x(t)\in \mathbb{R}^{n}$ is the state vector and $y(t)\in \mathbb{R}^m$ is the output. The left plot of Figure \ref{fig:0} shows the trajectories originating from two different initial conditions $\sbm{-2.5\\-5}$ and $\sbm{-1.5\\-3}$. In this plot, it is clear that 
the system \eqref{ex10} does not exhibit contraction behavior.
\begingroup
\begin{figure}[!htb]
 \centering
 \includegraphics[width=3in]{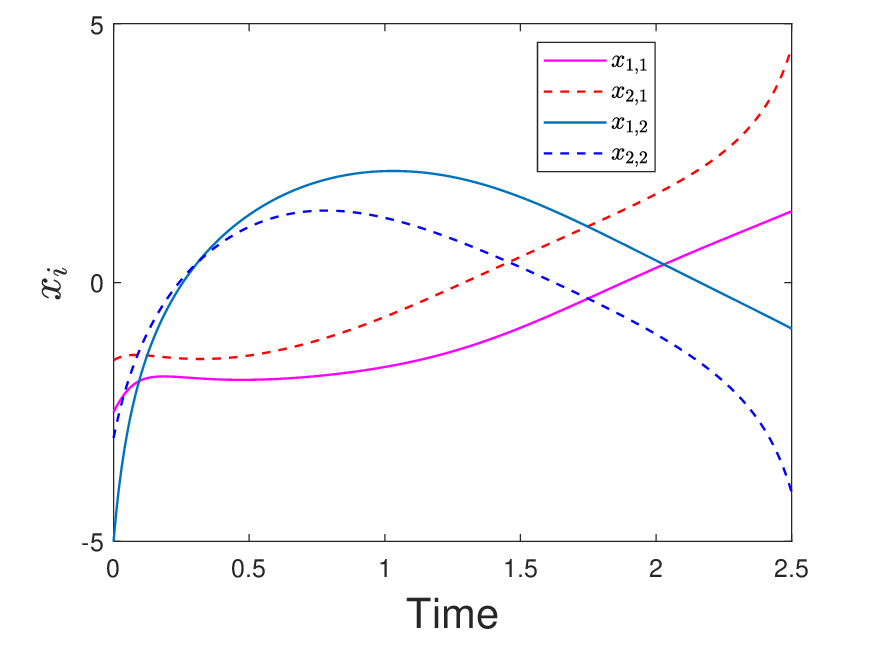}
 \includegraphics[width=3in]{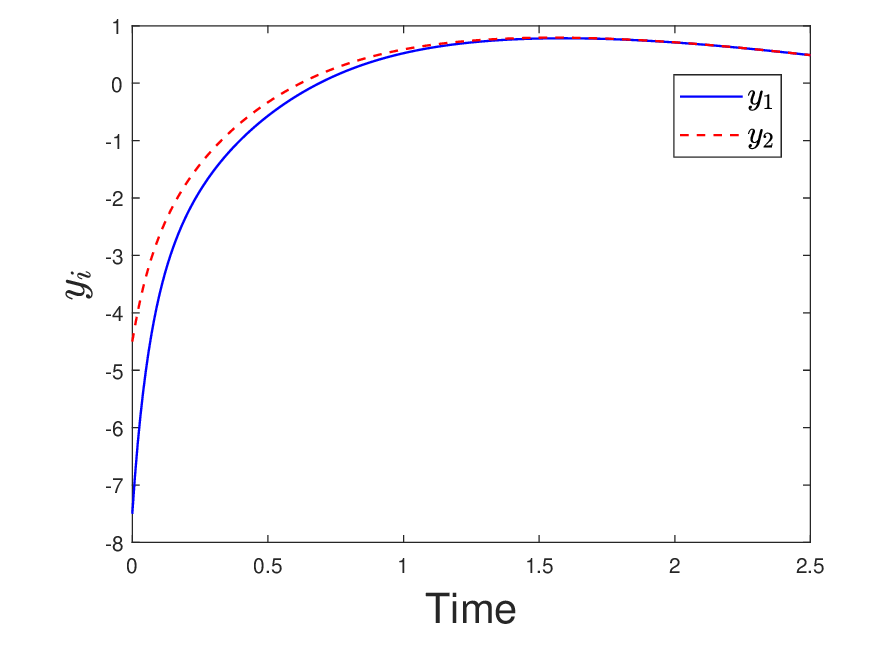}
  \caption{The plot of trajectories and output of time-varying system in Example \ref{ex1}
initialized at $\sbm{-2.5\\-5}$ and $\sbm{-1.5\\-3}$.} \label{fig:0}
\end{figure}
\endgroup
The variational system of \eqref{ex10} is given by
\begin{equation} \label{ex11}
\left\{\begin{array}{rl}
\dot{\xi}_{1} & = -(4+\sin t-0.6x_1x_2)\xi_{2}\\
& \qquad+\big(\cos(x_1+x_2)-0.3x_1^2\big)(\xi_{1}+\xi_{2}),\\ 
\dot{\xi}_{2}& =-(4+\sin t-0.6x_1x_2)\xi_{1}\\ 
& \qquad -\big(\sin(x_1+x_2)-0.3x_2^2\big)(\xi_{1}+\xi_{2}),\\
\nu & =\xi_1+\xi_2.
\end{array}\right.
\end{equation}
Using the output contraction Lyapunov function $V(x,\xi,t) = (\xi_1 + \xi_2)^2$, we can satisfy \eqref{t1} by setting $\alpha_1 = 1$, $\alpha_2 = 2$, $\alpha_3 = 0$, and $p = 2$. The derivative of $V(x,\xi,t)$ satisfies $\dot{V}(x,\xi,t)=-2\big(4-\sin t+\cos(x_1+x_2)+\sin(x_1+x_2)+0.3(x_1+x_2)^2\big)(\xi_1+\xi_2)^2$. Subsequently, \eqref{t2} can be fulfilled by taking $\alpha_4 = 2$. Hence \eqref{ex10} satisfies the hypotheses for the output contraction as outlined in Theorem \ref{theorem1}. The outputs corresponding to two different initial conditions $\sbm{-2.5\\-5}$ and $\sbm{-1.5\\-3}$ are depicted in the right plot of Figure \ref{fig:0}.
\end{example}
\subsection{Output exponential stability of time-invariant systems}
It is well known that for a contracting time-invariant system, all trajectories converge to an equilibrium exponentially. As an interesting particular case of our main results above, we can prove output exponential stability of the time-invariant systems by using Proposition \ref{proposition1}. 
Consider a time-invariant nonlinear autonomous system given by
\begin{equation}\label{ode}
 \left\{\begin{matrix}
\dot{x}=f(x),\\ 
y=h(x),
\end{matrix}\right.   
\end{equation}
where there exists a $x^*(t)$ (not necessarily an equilibrium), {which admits an output equilibrium $y^*$} (possibly unknown), such that $y^*=h(x^*(t))$. 
In the next definition, we extend \cite[Def.~2.3]{karafyllis2021relation} into the nonlinear systems case. 
\begin{definition}\label{d3}
A time-invariant autonomous system \eqref{ode} with an  output equilibrium $y^*$ is called {\em Output Exponentially Stable (OES)} with respect to the state if there exists positive numbers $c$ and $\alpha$ such that for any pair of the output $y(t)\in \mathbb{R}^m$ of \eqref{ode}, 
\begin{equation} \label{eq:d31}
\begin{aligned}
\|y(t)-y^*\|\leq ce^{-\alpha (t-t_0)}\|x(t_{0})-x^*(t_0)\|, \quad \forall t\geq t_0,
\end{aligned}
\end{equation}
where $x(t_0)$, $x^*(t_0)$ are the initial conditions of the state $x(t)$, $x^*(t_0)$, respectively. 
\end{definition}
In the subsequent corollary, we analyze the OES of \eqref{ode} through its variational system \eqref{vario}.
\begin{corollary}[Output Exponential Stability Lyapunov Condition]\label{corollary1}
The time-invariant system \eqref{ode} with its corresponding
variational system \eqref{vario} is {\em OES} if there exist positive constants $\alpha_1, \alpha_2, \alpha_3\in\mathbb{R}_{\geq 0}$, $p\in\mathbb{R}_{\geq 1}$ and a continuous function $V:\mathbb{R}^{n}\times \mathbb{R}^{n}\times\rightarrow \mathbb{R}_{\geq 0}$ such that 
\begin{equation} \label{c1}
\begin{aligned}
\alpha_1\|\nu\|^p\leq V(x,\xi)\leq \alpha_2\|\xi\|^p, \quad \forall x,\xi\in\R^n
\end{aligned}
\end{equation}
where $\nu:=\tfrac{\partial h}{\partial x}(x) \cdot \xi$,
and such that
\begin{equation} \label{c2}
\begin{aligned}
\frac{\partial V}{\partial x}f+\frac{\partial V}{\partial \xi}\frac{\partial f}{\partial x}\xi\leq -\alpha_3 V
\end{aligned}
\end{equation}
holds.
\end{corollary}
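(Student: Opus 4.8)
The plan is to recognize Corollary~\ref{corollary1} as the time-invariant instance of Theorem~\ref{theorem1}, and then to bridge the pairwise output-contraction estimate of Definition~\ref{d1} to the reference-based estimate of Definition~\ref{d3} by taking the given trajectory $x^*(\cdot)$ as one member of the output pair.

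First I would note that \eqref{ode} is a particular case of \eqref{odet} (with $f$, $h$ not depending on $t$), so its variational system \eqref{vario} is a particular case of \eqref{vari}. Because the candidate function $V$ in Corollary~\ref{corollary1} carries no explicit time argument, $\partial V/\partial t\equiv 0$, so the left-hand side of \eqref{t2} coincides with that of \eqref{c2}; moreover, choosing the constant ``$\alpha_3$'' of Theorem~\ref{theorem1} to be $0$ (which is allowed, since the constants there range over $\mathbb{R}_{\geq 0}$) collapses the factor $e^{\alpha_3(t-t_0)}$ in \eqref{t1} to $1$, so \eqref{c1} becomes exactly \eqref{t1}. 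With the ``$\alpha_4$'' of Theorem~\ref{theorem1} taken equal to the $\alpha_3$ of Corollary~\ref{corollary1}, the required strict inequality $\alpha_3<\alpha_4$ holds because $\alpha_3>0$. Theorem~\ref{theorem1} then applies and yields that \eqref{ode} is output contractive in the sense of Definition~\ref{d1}: there exist $c,\alpha>0$ such that
\begin{equation*}
\|y(t)-y'(t)\|\leq ce^{-\alpha(t-t_0)}\,\|x(t_0)-x'(t_0)\|,\qquad \forall\, t\geq t_0,
\end{equation*}
for every pair of outputs of \eqref{ode}. (Alternatively, one can reprove this directly as in the proof of Theorem~\ref{theorem1}: integrating \eqref{c2} along any solution gives $V(x(t),\xi(t))\leq e^{-\alpha_3(t-t_0)}V(x_0,\xi_0)$, and sandwiching with \eqref{c1} shows \eqref{vario} is OES with rate $\alpha_3/p$, whence Proposition~\ref{proposition1} gives the same conclusion.)

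Next I would use the standing assumption that $x^*(\cdot)$ is a solution of \eqref{ode} with $h(x^*(t))\equiv y^*$. Then the constant signal $t\mapsto y^*$ is itself an output of \eqref{ode}, generated by the initial condition $x^*(t_0)$. Specializing the inequality above to the pair $y(\cdot)$ and $y'(\cdot):=h(x^*(\cdot))=y^*$ yields
\begin{equation*}
\|y(t)-y^*\|\leq ce^{-\alpha(t-t_0)}\,\|x(t_0)-x^*(t_0)\|,\qquad \forall\, t\geq t_0,
\end{equation*}
which is exactly \eqref{eq:d31}. Hence \eqref{ode} is OES in the sense of Definition~\ref{d3} with the same constants $c$ and $\alpha$.

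I expect no genuine analytic obstacle here: all the real work is already contained in Theorem~\ref{theorem1} and, beneath it, in Proposition~\ref{proposition1}. The only points requiring care are bookkeeping---verifying that the choice $\alpha_3=0$ is admissible in \eqref{t1} and that the strict inequality of Theorem~\ref{theorem1} is preserved---and the mild conceptual step that the universally quantified output pair in Definition~\ref{d1} may be instantiated with the reference output $y^*$, which is legitimate precisely because $x^*(\cdot)$ is an actual trajectory of \eqref{ode}.
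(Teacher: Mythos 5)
Your proposal is correct and follows essentially the same route as the paper: establish OES of the variational system \eqref{vario} via the Lyapunov argument of Theorem~\ref{theorem1} (with the time-dependence trivialized, i.e.\ the theorem's $\alpha_3=0$), conclude output contractivity of \eqref{ode} through Proposition~\ref{proposition1}, and then instantiate the output pair with the admissible constant output $y^*=h(x^*(\cdot))$ to obtain \eqref{eq:d31}. Your explicit bookkeeping of the constants and of the admissibility of $\alpha_3=0$ is a welcome elaboration of what the paper leaves as ``similar arguments.''
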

\begin{proof}
The variational system of \eqref{ode} is
\begin{equation}\label{vario}
 \left\{\begin{matrix}
\dot{\xi }=\frac{\partial f}{\partial x}(x(t))\cdot \xi,\\ 
\nu=\frac{\partial h}{\partial x}(x(t))\cdot \xi,
\end{matrix}\right.   
\end{equation} 
If conditions \eqref{c1}-\eqref{c2} are satisfied then using similar arguments as in the proof of Theorem~\ref{theorem1}, the variational system \eqref{vario} is OES. Applying Proposition \ref{proposition1}, we can conclude that the system \eqref{ode} is output contraction. As $y^*$ represents one of admissible output of \eqref{ode}, and \eqref{ode} {is output contractive}, it follows that all the outputs will exponentially converge to $y^*$.
\end{proof}
\begin{example}\label{ex2}
Consider the following time-invariant system 
\begin{equation} \label{ex20}
\left\{\begin{array}{rl}
\dot{x}_{1}& =-3x_2-\sin(x_1+x_2),\\ 
\dot{x}_{2}& =-3x_1+\cos(x_1+x_2),\\
y& =x_1+x_2,
\end{array}\right.
\end{equation}
where $x(t)\in \mathbb{R}^{n}$ is the state vector and $y(t)\in \mathbb{R}^m$ is the output. The left of Figure \ref{fig:3} shows the trajectories of \eqref{ex20} originating from $\sbm{3\\3}$. This plot shows that \eqref{ex20} is unstable. 
The variational system of system \eqref{ex20} is given by
\begin{equation} \label{ex21}
\left\{\begin{array}{rl}
\dot{\xi}_{1} & =-3\xi_{2}-(\xi_{1}+\xi_{2})\cos(x_1+x_2),\\ 
\dot{\xi}_{2}& =-3\xi_{1}+(\xi_{1}+\xi_{2})\sin(x_1+x_2),\\
\nu& =\xi_1+\xi_2.
\end{array}\right.
\end{equation}
By taking $V(x,\xi) = (\xi_1 + \xi_2)^2$ as the Lyapunov function, we can fulfill \eqref{c1} by choosing $\alpha_1 = 1$, $\alpha_2 = 2$, and $p = 2$. The time-derivative of $V(x,\xi)$ satisfies $\dot{V}(x,\xi)=-2\big(3-\cos(x_1+x_2)+\sin(x_1+x_2)\big)(\xi_1+\xi_2)^2$. Consequently, \eqref{c2} can be satisfied by taking $\alpha_3 = 2$. Thus, the system \eqref{ex20} satisfies the hypotheses for OES as given in Corollary \ref{corollary1}. The output associated with the initial condition $\sbm{3\\3}$ is shown in the right plot of Figure~\ref{fig:3}, which shows that it clearly converges to an equilibrium $y^*{\approx 0.246}$.
\begingroup
\begin{figure}[!htb]
 \centering
 \includegraphics[width=3in]{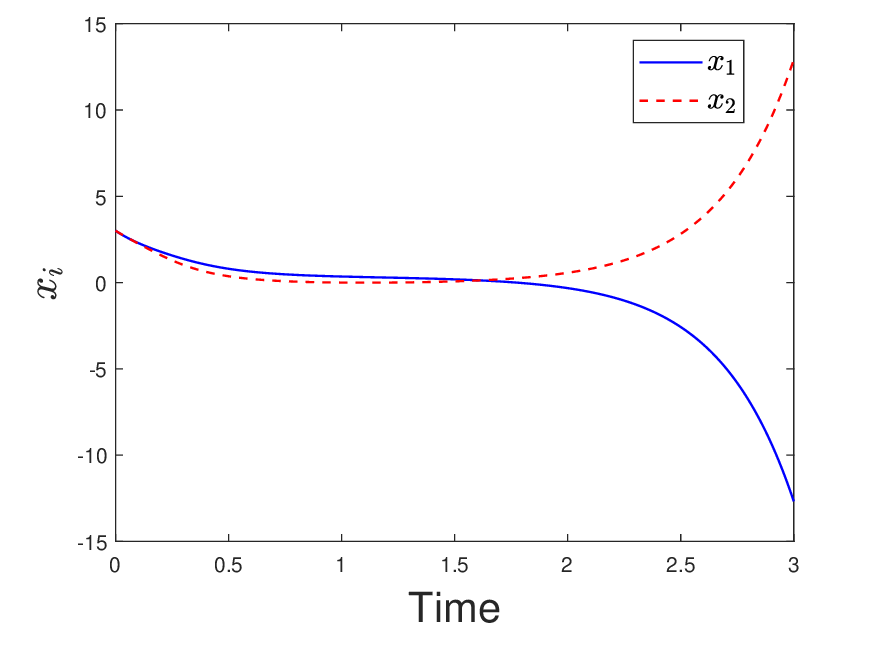}
\includegraphics[width=3in]{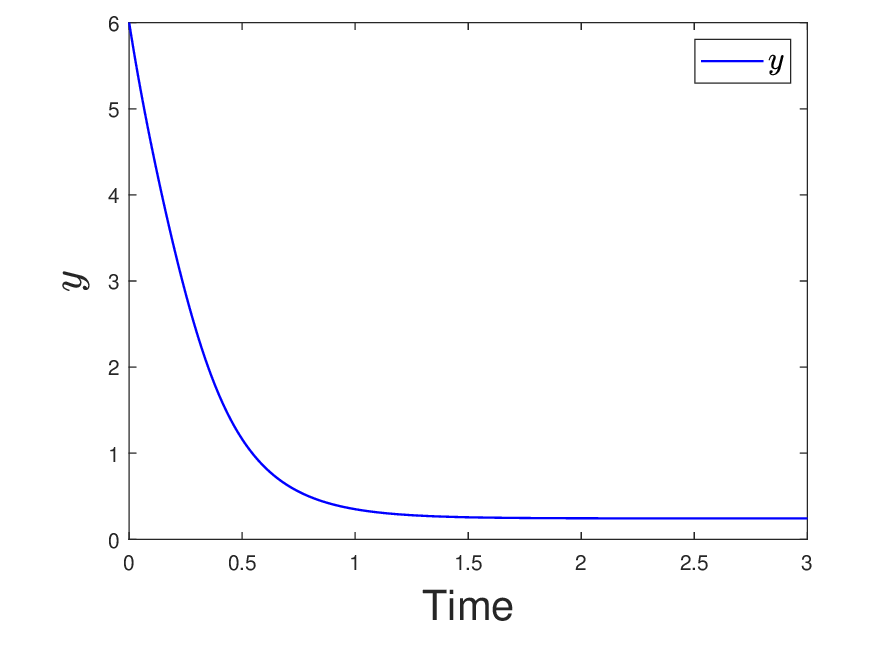}
  \caption{The plot of trajectories and outputs of the system in Example \ref{ex2}
initialized at $\sbm{3\\3}$.} \label{fig:3}
\end{figure}
\endgroup
\end{example}
\section{Conclusion}
This paper introduces the notion of output contraction to characterize the contraction behavior of a nonlinear time-varying system via its output map. We derive a necessary and sufficient condition that establishes a link between the output contraction property of the original system and the output exponential stability (OES) of its variational system. Additionally, we introduce the output contraction Lyapunov condition to guarentee OES of the variational system.

\bibliographystyle{plain}        
\bibliography{autosam}  
\end{document}